\documentclass[letterpaper, 10 pt, conference]{ieeeconf}
\IEEEoverridecommandlockouts
\newcommand{\comment}[1]{}

\newcommand{\be}[0]{\begin{equation}}
\newcommand{\ee}[0]{\end{equation}}
\newcommand{\ben}[0]{\begin{equation*}}
\newcommand{\een}[0]{\end{equation*}}
\newcommand{\bena}[0]{\begin{eqnarray*}}
\newcommand{\eena}[0]{\end{eqnarray*}}
\newcommand{\bea}[0]{\begin{eqnarray}}
\newcommand{\eea}[0]{\end{eqnarray}}

\usepackage[english]{babel}
\usepackage{amsthm}
\usepackage[version=4]{mhchem}
\usepackage{siunitx}
\usepackage{array}
\usepackage{longtable}
\usepackage{nomencl}
\usepackage{mathrsfs} 
\usepackage{mathtools} 
\usepackage{optidef}
\usepackage{multicol}
\usepackage{svg}
\usepackage{lipsum}
\usepackage{scalerel}
\usepackage{stackengine}
\usepackage{booktabs}
\usepackage{caption}
\usepackage{enumerate}
\usepackage{cite}
\usepackage{amssymb,amsfonts}
\usepackage{algorithmic}
\usepackage{graphicx}
\usepackage{textcomp}
\usepackage{bm}
\usepackage{subfig}

\theoremstyle{definition}
\newtheorem{definition}{Definition}
\newtheorem{theorem}{Theorem}

\newtheorem{assumption}{Assumption}

\newtheorem{proposition}{Proposition}

\begin{document}
\title{Combinatorial Safety-Critical Coordination of Multi-Agent Systems via Mixed-Integer Responsibility Allocation and Control Barrier Functions} 

\author{Johannes Autenrieb$^{1}$, Mark Spiller$^{1}$, Hyo-Sang Shin$^{2}$, and Namhoon Cho$^{3}$
\thanks{$^{1}$ German Aerospace Center (DLR), Institute of Flight Systems, 38108, Braunschweig, Germany.
(email: \texttt{johannes.autenrieb@dlr.de, mark.spiller@dlr.de})}
\thanks{$^{2}$ 
Cho Chun Shik Graduate School of Mobility, Korea Advanced Institute of Science and Technology (KAIST), Daejeon 34141, Republic of Korea.
(email: \texttt{hyosangshin@kaist.ac.kr}) }
\thanks{$^{3}$ 
Department of Aerospace Engineering, Seoul National University, Seoul 08826, Republic of Korea.
(email: \texttt{namhoon.cho@snu.ac.kr}) }
}

\maketitle

\begin{abstract}
This paper presents a hybrid safety-critical coordination architecture for multi-agent systems operating in dense environments. While control barrier functions (CBFs) provide formal safety guarantees, decentralized implementations typically rely on ego-centric safety filtering and may lead to redundant constraint enforcement and conservative collective behavior. To address this limitation, we introduce a combinatorial coordination layer formulated as a mixed-integer linear program (MILP) that assigns collision-avoidance responsibilities among agents. By explicitly distributing enforcement tasks, redundant reactions are eliminated and computational complexity is reduced. Each agent subsequently solves a reduced local quadratic program enforcing only its assigned constraints.
\end{abstract}

\section{Introduction}
Operating multiple autonomous agents in close proximity introduces challenges beyond nominal guidance and task execution. In many aerospace and robotic applications such as cooperative surveillance, formation flight, and coordinated engagement, agents must achieve mission objectives while maintaining strict safety under tightly coupled dynamics. In these settings, collision avoidance becomes a continuous coordination requirement that directly affects overall system performance. Although high-level assignment and cooperative guidance strategies can align global objectives and target allocations~\cite{LI2024109212,Li2023}, they generally do not resolve conflicts at the trajectory level~\cite{Lv2024}. Consequently, agents with consistent task assignments may still generate incompatible motion plans, making additional safety coordination necessary.

Control Barrier Functions (CBFs) have emerged as an effective tool for enforcing safety constraints in continuous-time systems by guaranteeing forward invariance of admissible sets~\cite{ames2016control,Xiao_2022hocbf}. In multi-agent collision avoidance, CBF-based formulations are commonly deployed in a decentralized manner, where each agent independently enforces all relevant pairwise safety constraints~\cite{BORRMANN2015UAV, autenrieb2025f}. 
While this approach preserves safety under bounded control authority, it induces a rapid growth in the number of active constraints as interaction density increases, which in turn amplifies computational demands and can lead to infeasibility in tightly coupled scenarios~\cite{Aloor_2025,Isaly2024OnTheFeas}.

\begin{figure}[!t]
    \centering
    \includegraphics[width=0.9\columnwidth]{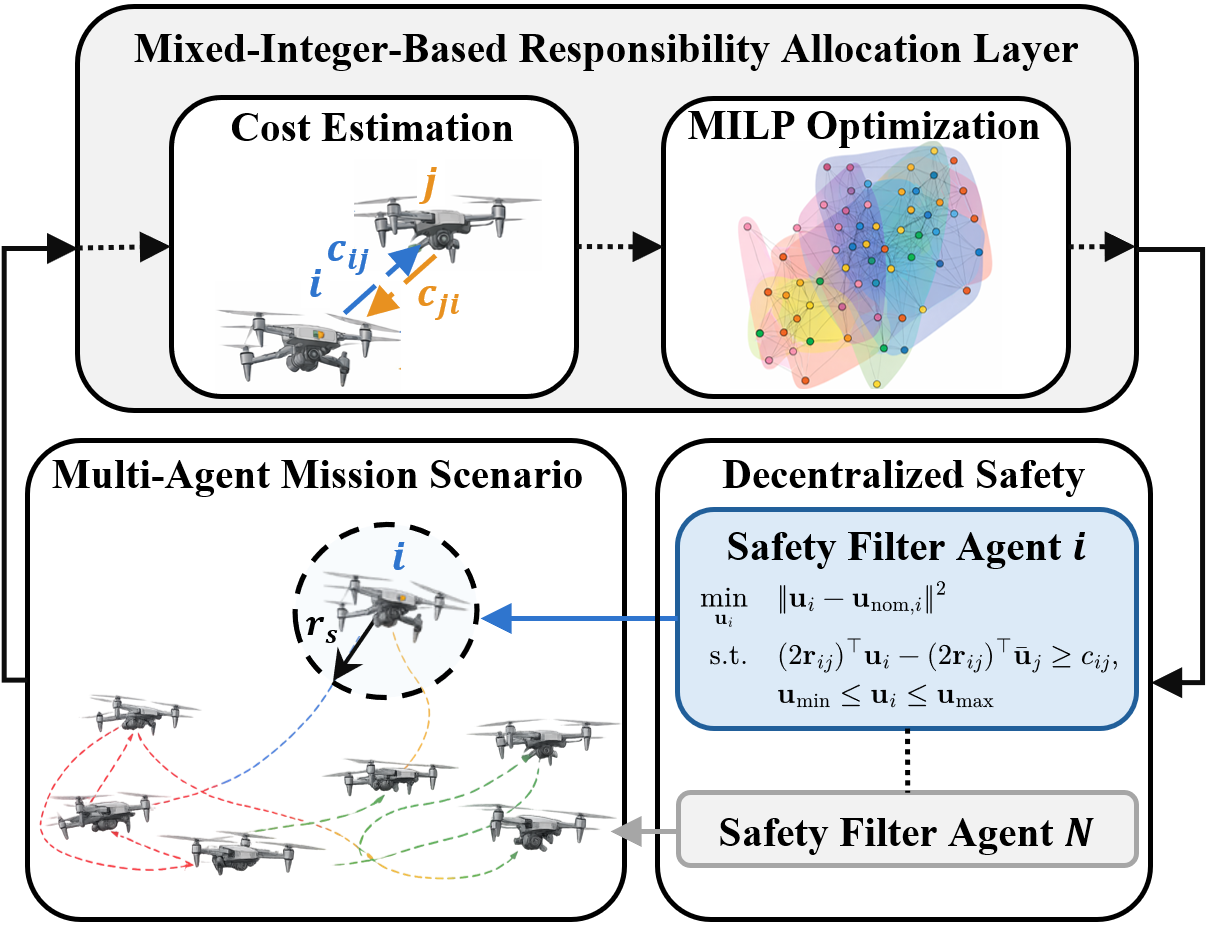}
    \caption{Illustration of the control architecture with mixed-integer-based coordination and decentralized safety filters.}
    \label{fig:overview_concept}
\end{figure}

More fundamentally, most of the existing decentralized safety enforcement methods treat multi-agent collision avoidance as an ego-centric decision problem. Each agent maintains safety by enforcing local safety constraints, adjusting its nominal control input only as much as necessary, while assuming neighboring agents behave similarly. Such locally optimal responses do not generally yield system-efficient outcomes: multiple agents may react unnecessarily to the same interaction, resulting in redundant control effort and degraded collective performance. From a system-level perspective, the objective is not to ensure safety at all costs for each agent independently, but to minimize the aggregate control burden required to maintain safety across the team \cite{autenrieb2025.auction}.

To address this issue, this paper introduces a hybrid control architecture that combines mixed-integer coordination with decentralized safety-critical control (see Fig.~\ref{fig:overview_concept}). Multi-agent collision avoidance is posed as a combinatorial optimization problem, where CBF constraints enforce safety and binary decision variables assign collision-avoidance responsibility among agents. This explicit allocation enables selective enforcement of safety constraints, allowing conflicts to be resolved by agents best positioned to do so. Further, each agent solves a reduced local safety problem, considering only its assigned interactions, which improves the computational effort and reduces unnecessary conservatism.  Numerical results show that this approach preserves formal safety guarantees while significantly increasing the mission performance.
\section{Preliminaries}
\label{sec:Preliminaries}
Consider a control-affine nonlinear system 
\begin{align}
    \dot{\mathbf{x}} &= \mathbf{f}(\mathbf{x}) + \mathbf{g}(\mathbf{x}) \mathbf{u}, \label{NonlinearPlant1}
\end{align}
where \( \mathbf{x} \in \chi \subset R^n\), \( \mathbf{u} \in U \subset R^m \), \( \mathbf{y} \in \mathbb{R}^p \), and \( \mathbf{f}: \chi \to \mathbb{R}^n \), \( \mathbf{g}: \chi \to \mathbb{R}^{n \times m} \), and \( \mathbf{p}: \chi \to \mathbb{R}^p \) are sufficiently smooth functions. To define safety, we consider a continuously differentiable function \( h: \chi \rightarrow \mathbb{R} \) and a set \( S \) defined as the zero-superlevel set of \( h \), yielding:
\begin{equation}
    \label{Safe_set_1}
    S \triangleq \big\{ \mathbf{x} \in \chi \mid h(\mathbf{x}) \geq 0 \big\}.
\end{equation}



We introduce the notion of a {control barrier function} (CBF) such that its existence allows the system to be rendered safe with respect to \( S \) using a control input \( \mathbf{u} \) \cite{ames2016control}.
\begin{definition}[CBF, \cite{Ames_2017}]
Let \( S \subset \chi \) be the zero-superlevel set of a continuously differentiable function \( h: \chi \rightarrow \mathbb{R} \). The function \( h \) is a CBF for \( S \) for all \( \mathbf{x} \in S \), if there exists a class \( \mathcal{K}_{\infty} \) function \( \alpha(h(\mathbf{x})) \) such that for the dynamics defined in \eqref{NonlinearPlant1} we obtain:
\begin{equation}
    \label{ControlBarrierFunction_simple}
    \sup_{\mathbf{u}\in U} [L_{\mathbf{f}} h(\mathbf{x}) + L_{\mathbf{g}} h(\mathbf{x}) \mathbf{u}]  \geq -\alpha(h(\mathbf{x})),
\end{equation}
\end{definition}

\begin{theorem}
\label{theorem_LCBF}
Given a set \( S \subset \chi \), defined via the associated CBF as in \eqref{Safe_set_1}, any Lipschitz continuous controller \( \mathbf{k}(\mathbf{x}) \in K_{S}(\mathbf{x}) \) with 
\begin{equation}
    K_{S} (\mathbf{x}) = \big\{ \mathbf{u} \in U : L_{\mathbf{f}} h(\mathbf{x}) + L_{\mathbf{g}} h(\mathbf{x}) \mathbf{u} + \alpha(h(\mathbf{x})) \ge 0 \big\}
    \label{definition_safe_controller}
\end{equation}
renders the system \eqref{NonlinearPlant1} forward invariant within \( S \) \cite{XU2015}. One way to construct a controller satisfying \eqref{definition_safe_controller} is through a quadratic program-based safety filter, as proposed in~\cite{Ames_2014}: 
\begin{equation}
\begin{aligned}
\mathbf{u} = \arg\min_{\mathbf{u}\in U} \quad
& \|\mathbf{u} - \mathbf{u}^\star\|_2^2 \\
\text{s.t.}\quad
& L_{\mathbf{f}} h(\mathbf{x})
  + L_{\mathbf{g}} h(\mathbf{x})\,\mathbf{u}
  \ge -\alpha\!\left(h(\mathbf{x})\right),
\end{aligned}
\label{eq:qp_cbf}
\end{equation}
where $\mathbf{u}^\star$ is a performance-oriented, but potentially not safe, control input.
\end{theorem}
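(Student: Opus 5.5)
The plan is the standard comparison-lemma argument for forward invariance, followed by a short check that the QP filter \eqref{eq:qp_cbf} actually selects from $K_S(\mathbf{x})$.

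\textbf{Step 1: well-posedness.} Fix a Lipschitz controller $\mathbf{k}(\mathbf{x})\in K_S(\mathbf{x})$ and form the closed loop $\dot{\mathbf{x}}=\mathbf{f}(\mathbf{x})+\mathbf{g}(\mathbf{x})\mathbf{k}(\mathbf{x})$. Since $\mathbf{f},\mathbf{g}$ are smooth and $\mathbf{k}$ is Lipschitz, the right-hand side is locally Lipschitz. For any $\mathbf{x}_0\in S$ there is therefore a unique solution $\mathbf{x}(t)$ on a maximal interval $[0,\tau_{\max})$.

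\textbf{Step 2: comparison.} Set $\eta(t)=h(\mathbf{x}(t))$. Because $h$ is $C^1$, the chain rule and membership in $K_S$ give
\[
\dot\eta(t)=L_{\mathbf{f}}h(\mathbf{x})+L_{\mathbf{g}}h(\mathbf{x})\mathbf{k}(\mathbf{x})\ge-\alpha(\eta(t)).
\]
Compare with the scalar ODE $\dot y=-\alpha(y)$, $y(0)=\eta(0)\ge 0$. The class $\mathcal{K}_\infty$ property makes $\alpha(0)=0$, so $y\equiv 0$ is an equilibrium. Hence a solution starting at $y(0)\ge 0$ stays nonnegative: it is nonincreasing while positive and cannot cross zero.

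Applying the comparison lemma gives $\eta(t)\ge y(t)\ge 0$ on $[0,\tau_{\max})$, so $\mathbf{x}(t)\in S$.

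\textbf{Main obstacle.} The comparison lemma needs uniqueness of solutions of $\dot y=-\alpha(y)$, and $\alpha$ is only assumed to be class $\mathcal{K}_\infty$, not Lipschitz. I would handle this in one of two ways:
\begin{itemize}
\item assume $\alpha$ is locally Lipschitz, which is the standard hypothesis in \cite{XU2015};
\item or use Nagumo's theorem directly. On the boundary, $h(\mathbf{x})=0$ gives $\dot h\ge-\alpha(0)=0$, which is the sub-tangentiality condition. This also requires a regularity assumption on $\partial S$, e.g.\ $\nabla h\neq 0$ when $h=0$.
\end{itemize}
Either route gives forward invariance of $S$ on the maximal interval of existence. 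Extending to all $t\ge0$ requires boundedness of $S$ or completeness of the closed loop, which I would state as an implicit assumption.

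\textbf{Step 3: the QP filter.} The feasible set of \eqref{eq:qp_cbf} is exactly $K_S(\mathbf{x})$. It is nonempty by \eqref{ControlBarrierFunction_simple}, assuming the supremum is attained, e.g.\ $U$ compact or $U=\mathbb{R}^m$. The objective is strictly convex, so the minimizer is unique and lies in $K_S(\mathbf{x})$.

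For $U=\mathbb{R}^m$ the minimizer has the closed form
\[
\mathbf{u}=\mathbf{u}^\star+\max\!\left(0,\,-\frac{\psi}{\|L_{\mathbf{g}}h\|^2}\right)L_{\mathbf{g}}h^\top,
\]
where $\psi=L_{\mathbf{f}}h+L_{\mathbf{g}}h\,\mathbf{u}^\star+\alpha(h)$. This expression is locally Lipschitz wherever $L_{\mathbf{g}}h\neq0$, provided $\mathbf{u}^\star$ and $\alpha$ are Lipschitz, following \cite{Ames_2014}. Step 2 then applies to the filtered controller.
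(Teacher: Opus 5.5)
The paper gives no proof of this statement; it cites it from the references, so your argument has to stand on its own. Your Nagumo route is sound. With $\nabla h\neq 0$ on $\{h=0\}$, the tangent cone of $S$ at a boundary point is the half-space $\{\mathbf{v}:\nabla h(\mathbf{x})^\top\mathbf{v}\ge 0\}$, and $\dot h\ge-\alpha(0)=0$ there is exactly sub-tangentiality.

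Your primary comparison argument in Step 2, however, has a gap, and you have placed the obstacle in the wrong spot. Uniqueness for $\dot y=-\alpha(y)$ is not an issue: since $\alpha$ is increasing, $\tfrac{d}{dt}(y_1-y_2)^2=-2(y_1-y_2)\bigl(\alpha(y_1)-\alpha(y_2)\bigr)\le 0$, so forward solutions are unique without any Lipschitz assumption. The real problem is that $\dot\eta\ge-\alpha(\eta)$ is only available while $\mathbf{x}(t)\in S$. In the paper's setting the CBF condition is imposed only for $\mathbf{x}\in S$, and a class-$\mathcal{K}_\infty$ function is not even defined at negative arguments, so $K_S(\mathbf{x})$ says nothing off $S$. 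Applying the comparison lemma on all of $[0,\tau_{\max})$ therefore presupposes that the trajectory stays in $S$. At an instant where $\eta=0$, the single inequality $\dot\eta\ge 0$ does not stop $\eta$ from turning negative. Assuming $\alpha$ locally Lipschitz does not help. For $\dot x=u$, $h(x)=-x^2$, $S=\{0\}$ and $\mathbf{k}\equiv 1$, the condition holds on $S$ for every $\alpha$, yet $x(t)=t$ leaves $S$ at once.

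The hypothesis that actually rescues the comparison route is a different one: require the inequality on an open set $D\supset S$, with $\alpha$ an extended class-$\mathcal{K}$ function, as in the zeroing-CBF setting of the cited work. Then $\eta\ge 0$ holds as long as $\mathbf{x}(t)\in D$. You do not even need the comparison lemma: at the last zero before a hypothetical negative value, $\dot\eta\ge-\alpha(\eta)>0$ gives a contradiction via the mean value theorem. Since $S$ is closed and $D$ is open, a continuity argument shows the trajectory never leaves $D$, hence never leaves $S$. So replace ``assume $\alpha$ Lipschitz'' by this neighborhood assumption, or keep only the Nagumo route. One of the two extra hypotheses is genuinely needed: under the paper's literal reading (condition only on $S$, no regularity), the example above is a counterexample.

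Step 3 correctly shows that the QP selects from $K_S(\mathbf{x})$. If you want the QP controller to qualify for the first part of the theorem, you also need $L_{\mathbf{f}}h$ and $L_{\mathbf{g}}h$ locally Lipschitz. For a box-constrained $U$ (the case used later in the paper) your closed form no longer applies, so Lipschitz continuity of the minimizer needs a separate argument.
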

For safety constraints whose relative degree with respect to the system dynamics
is greater than one, the CBF framework admits a systematic extension in the form
of HOCBFs \cite{Xiao_2022hocbf}. We consider a time-varying function to define an invariant set for system \eqref{NonlinearPlant1}. For a $d$-th order differentiable function $h : \mathbb{R}^n \times [t_0, \infty) \to \mathbb{R}$ (where $t_0$ denotes the initial time), we define a series of functions $\psi_0, \psi_1, \ldots, \psi_d$ recursively as:
\begin{align}
    \psi_0(x(t), t) &= h(x(t), t),\notag\\
    \psi_1(x(t), t) &= \dot{\psi}_0(x(t), t) + \alpha_1(\psi_0(x(t), t)), \notag\\
    \psi_2(x(t), t) &= \dot{\psi}_1(x(t), t) + \alpha_2(\psi_1(x(t), t)), \label{eqn:HOCBF_safe_sets_1} \\
    &\vdots \nonumber \notag\\
    \psi_d(x(t), t) &= \dot{\psi}_{d-1}(x(t), t) + \alpha_d(\psi_{d-1}(x(t), t)), \notag
\end{align}
where $\alpha_1, \alpha_2, \ldots, \alpha_d$ are class $\mathcal{K}$ functions. We further define a series of sets $S_1(t), S_2(t), \ldots, S_d(t)$ associated with \eqref{eqn:HOCBF_safe_sets_1}  in the form:
\begin{align}
    S_1(t) &= \{x \in \mathbb{R}^n : \psi_0(x(t), t) \geq 0\}, \notag\\
    S_2(t) &= \{x \in \mathbb{R}^n : \psi_1(x(t), t) \geq 0\}, \label{eqn:HOCBF_safe_sets_2}\\
    &\vdots \nonumber \\
    S_d(t) &= \{x \in \mathbb{R}^n : \psi_{d-1}(x(t), t) \geq 0\}.\notag
\end{align}
The forward invariance of these sets collectively ensures that the system trajectory remains within a safe region for all $t \geq t_0$.
\begin{definition}[HOCBF, \cite{Xiao_2022hocbf}]
\label{HOCBF_definition}
Let $S_1(t), S_2(t), \ldots, S_d(t)$ be defined by \eqref{eqn:HOCBF_safe_sets_2} and $\psi_0(x(t), t), \psi_1(x(t), t), \ldots, \psi_d(x(t), t)$ be defined by \eqref{eqn:HOCBF_safe_sets_1}.
A function $h : \mathbb{R}^n \times [t_0, \infty) \to \mathbb{R}$ is a HOCBF of relative degree $d$ for the system defined in \eqref{NonlinearPlant1} if there exist differentiable class $\mathcal{K}$ functions $\alpha_1, \alpha_2, \ldots, \alpha_d$ such that:
\begin{equation}
\begin{split}
    L_f^d h(x(t), t) &+ L_g^{d-1} h(x(t), t) u\\& + \frac{\partial^d h(x(t), t)}{\partial t^d} + O(h(x(t), t))\\& + \alpha_d(\psi_{d-1}(x(t), t)) \geq 0,
\end{split}
\end{equation}
for all $(x(t), t) \in S_1(t) \cap S_2(t) \cap \cdots \cap S_d(t) \times [t_0, \infty)$. In the above equation, $O(h(x(t), t))$ denotes the remaining Lie derivatives along $f$ and partial derivatives with respect to $t$ with degree less than or equal to $d - 1$.
\end{definition}

\section{Problem Setup}
\label{sec:Problem}
Let \( \mathcal{M} = \{1, 2, \dots, N\} \) denote the index set of \( N \) effectors. The dynamics of each agent \( i \in \mathcal{M} \) are described by a second-order integrator model in three-dimensional Euclidean space: 
\begin{equation}
    \dot{\mathbf x}_i = 
    \begin{bmatrix}
        \dot{\mathbf p}_i \\ \dot{\mathbf v}_i
    \end{bmatrix}
    =
    \begin{bmatrix}
        0 & \mathbf I \\
        0 & 0
    \end{bmatrix}
    \begin{bmatrix}
        \mathbf p_i \\ \mathbf v_i
    \end{bmatrix}
    +
    \begin{bmatrix}
        0 \\ \mathbf I
    \end{bmatrix}
    \mathbf u_i,
    \label{eq:LTI_system_problem}
\end{equation}
where \( \mathbf{p}_i \in \mathbb{R}^3 \) denotes the position of agent \( i \), \( \mathbf v_i \in \mathcal V  \subset \mathbb{R}^3 \) its velocity, and \( \mathbf u_i \in \mathcal U \subset \mathbb{R}^3 \).

The agents are assumed to be point masses subject to control and physical constraints, including input saturation.
We assume the following physical bounds:


\begin{equation}
\mathcal V := \{ \mathbf v \in \mathbb R^m \mid \mathbf v_{\min} \le \mathbf v \le \mathbf v_{\max} \},
\label{eq:velocity_bound}
\end{equation}

\begin{equation}
\mathcal U := \{ \mathbf u \in \mathbb R^m \mid \mathbf u_{\min} \le \mathbf u \le \mathbf u_{\max} \}.
\label{eq:input_bound}
\end{equation}

Define the relative quantities between two agents $i$ and $j$:
\begin{equation}
    \mathbf r_{ij} = \mathbf p_i - \mathbf p_j, \quad \mathbf v_{ij} = \mathbf v_i - \mathbf v_j.
    \label{eq:realtive_kinematics}
\end{equation}

Each agent \( i \in \mathcal{M} \) is assigned an individual mission objective, such as reaching a target location or performing a task in space. These objectives are assumed to be externally provided and fixed throughout the mission. To achieve its objective, each agent generates a nominal control input \( \mathbf{u}_{\mathrm{nom},i} \in \mathbb{R}^3 \) based on its current state \( x_i \in \mathbb{R}^6 \). This nominal control law may arise from any suitable controller, and is abstractly represented by
\begin{equation}
    \mathbf{u}_{\mathrm{nom},i} = K_i(x_i, t),
\end{equation}
where \( K_i: \mathbb{R}^6 \times \mathbb{R}_{\geq 0} \to \mathbb{R}^3 \) denotes the performance-oriented control strategy designed. However, such nominal controllers do not account for safety-critical constraints arising from shared operation in dynamic multi-agent environments. To ensure safe operation, a real-time safety mechanism is required to modify the nominal input and enforce constraint satisfaction. For large-scale systems with tight coupling and limited communication, this necessitates safety filters that are both formally correct and computationally scalable. 
\section{Combinatorial Safety for Multi-Agent Systems via Mixed-Integer Optimization}
\label{Combinatorial Safety for Multi-Agent Systems via Mixed-Integer Optimization}
Ensuring collision-free operation of large-scale multi-agent systems (MAS) requires mechanisms that go beyond purely ego-centric safety reasoning, since independent safety reactions can lead to redundant control effort and degraded collective performance. We therefore begin by formulating individual safety constraints using HOCBFs under the assumption of pairwise interactions. For each agent \( i \in \mathcal{M} = \{1, \dots, N\} \) and each neighboring agent \( j \in \mathcal{N}_i \subseteq \mathcal{M} \setminus \{i\} \), a HOCBF is constructed to enforce forward invariance of the safe set
\begin{equation}
\mathcal{S}_{ij} = \{(x_i, x_j) \mid h_{ij}(x_i, x_j) \geq 0\}.
\end{equation}
The barrier function
\begin{equation}
h_{ij}(\mathbf{x}_i, \mathbf{x}_j) = \|\mathbf{r}_{ij}\|^2 - r_s^2
\label{eq:barrier}
\end{equation}
ensures a minimum separation radius \( r_s > 0 \). Its time derivatives follow from the relative dynamics,
\begin{equation}
\dot{h}_{ij} = 2 \mathbf{r}_{ij}^\top \mathbf{v}_{ij}, 
\qquad
\ddot{h}_{ij} = 2 \|\mathbf{v}_{ij}\|^2 + 2 \mathbf{r}_{ij}^\top (\mathbf{u}_i - \mathbf{u}_j),
\label{eq:barrier_derivatives}
\end{equation}
where \( \mathbf{u}_i \) and \( \mathbf{u}_j \) denote the control inputs of agents \( i \) and \( j \), respectively. The recursive HOCBF construction reads
\begin{equation}
\psi_0 = h_{ij}, \quad 
\psi_1 = \dot{\psi}_0 + \gamma_1 \psi_0, \quad 
\psi_2 = \dot{\psi}_1 + \gamma_2 \psi_1,
\label{eq:recursive_hocbf}
\end{equation}
with gains \( \gamma_1,\gamma_2>0 \). Satisfaction of \( \psi_2 \ge 0 \) for all \( t>0 \) guarantees forward invariance of \( \mathcal{S}_{ij} \). The resulting affine safety constraint is
\begin{equation}
(2\mathbf{r}_{ij})^\top \mathbf{u}_i 
- (2\mathbf{r}_{ij})^\top \mathbf{u}_j 
\ge c_{ij},
\label{eq:affine_constraint}
\end{equation}
where
\begin{equation}
c_{ij} = -2 \|\mathbf{v}_{ij}\|^2 
- 2 (\gamma_1 + \gamma_2) \mathbf{r}_{ij}^\top \mathbf{v}_{ij}
- \gamma_2 (\|\mathbf{r}_{ij}\|^2 - r_s^2).
\label{eq:cij}
\end{equation}

In a purely decentralized setting, agent \( i \) does not have access to the true control input \( \mathbf{u}_j \) of agent \( j \), so~\eqref{eq:affine_constraint} cannot be enforced directly. To obtain a locally implementable formulation, behavioral assumptions or input predictions must be introduced. For instance, reciprocal avoidance \( \mathbf{u}_j = -\mathbf{u}_i \) may be adopted under cooperative communication, whereas a non-adversarial assumption \( \mathbf{u}_j = \mathbf{0} \) can be used in the absence of coordination. More generally, the unknown neighbor action is replaced by an estimate \( \bar{\mathbf{u}}_j \), yielding the local quadratic program
\begin{equation}
\begin{aligned}
\min_{\mathbf{u}_i} \quad & \|\mathbf{u}_i - \mathbf{u}_{\mathrm{nom},i}\|^2 \\
\text{s.t.} \quad 
& (2\mathbf{r}_{ij})^\top \mathbf{u}_i 
- (2\mathbf{r}_{ij})^\top \bar{\mathbf{u}}_j 
\ge c_{ij}, 
\quad \forall j \in \mathcal{A}_i, \\
& \mathbf{u}_{\min} \le \mathbf{u}_i \le \mathbf{u}_{\max}.
\end{aligned}
\label{eq:local_qp}
\end{equation}
While~\eqref{eq:local_qp} guarantees pairwise safety under standard HOCBF feasibility conditions, its scalability deteriorates in dense environments. In particular, each agent must account for a growing number of simultaneously pairwise constraints, leading to high-dimensional QPs, redundant enforcement, and potential feasibility loss in dense environments. Moreover, since multiple agents may react simultaneously to the same interaction, the resulting control actions are generally not system-optimal and can induce unnecessary aggregate control effort, thereby degrading the overall collective performance of the MAS.

To reduce redundant enforcement and improve scalability, we introduce a global mixed-integer-based coordination layer. We therefore reinterpret each safety constraint $(i,j)\in\mathcal{A}\subset\mathcal{M}\times\mathcal{M}$ as a binary task that must be enforced by at least one of its incident agents.  Let \( z_{ij}\in\{0,1\} \) denote a binary assignment variable, where \( z_{ij}=1 \) indicates that agent \( i \) is responsible for enforcing constraint \( (i,j) \). The coverage condition
\begin{equation}
z_{ij} + z_{ji} \ge 1,
\qquad \forall (i,j)\in\mathcal{A},
\label{eq:coverage}
\end{equation}
ensures that every safety constraint is enforced by at least one agent. In contrast to the decentralized formulation~\eqref{eq:local_qp}, the centralized coordination problem optimizes over all control inputs simultaneously. Hence, the neighbor input \( \mathbf{u}_j \) appears as a decision variable rather than as an assumed quantity. The coupling between agents is therefore resolved directly at the optimization level. The joint allocation-and-control problem can be written as the following mixed-integer nonlinear program:
\begin{align}
\min_{\bm{z},\,\{\mathbf{u}_i\}} \quad 
& \sum_{i=1}^{N} \|\mathbf{u}_i - \mathbf{u}_{\mathrm{nom},i}\|^2 \label{eq:true_minlp}\\
\text{s.t.}\quad 
& z_{ij}
\Big(
(2\mathbf{r}_{ij})^\top \mathbf{u}_i 
- (2\mathbf{r}_{ij})^\top \mathbf{u}_j 
- c_{ij}
\Big)
\ge 0,
 \forall (i,j)\in\mathcal{A}, \notag\\
& z_{ij} \in \{0,1\}, \notag\\
& z_{ij} + z_{ji} \ge 1, \notag\\
& \mathbf{u}_{\min} \le \mathbf{u}_i \le \mathbf{u}_{\max}.\notag
\end{align}

Owing to the presence of binary assignment variables, a quadratic objective, bilinear coupling effects and the restricted admissible control input set of each agent, Problem~\eqref{eq:true_minlp} is inherently difficult to solve. In dense multi-agent scenarios, the induced combinatorial and potentially nonconvex structure gives rise to a computational burden that is frequently incompatible with the real-time requirements of practical multi-agent systems, thereby motivating the introduction of a surrogate mixed-integer linear program (MILP) formulation presented subsequently. Further such responsibility allocation layer is inherently global, as the assignment vector \( \mathbf{z} \) is computed using access to the interaction set \( \mathcal{A} \) via a globally consistent consensus mechanism and broadcasts the decision for safety enforcement to all relevant agents.

To obtain a tractable surrogate of~\eqref{eq:true_minlp}, we introduce the following structural assumptions.

\begin{assumption}[Absence of higher-order assignment coupling]
\label{ass:no_higher_order}
The incremental deviation induced by assigning constraint $(i,j)$ to agent $i$ is assumed to depend on the corresponding pairwise interaction and does not depend on whether agent $i$ is simultaneously assigned any other constraint $(i,k)$.
\end{assumption}

Assumption~\ref{ass:no_higher_order} removes combinatorial coupling between constraints assigned to the same agent and allows the allocation layer to be modeled via pairwise surrogate costs.
Under this assumption, the incremental cost of assigning safety constraint $(i,j)$ to agent $i$ depends only on the local state $(\mathbf{x}_i,\mathbf{x}_j)$, the nominal input $\mathbf{u}_{\mathrm{nom},i}$ of agent $i$ and the input $u_j$ of agent $j$, and is independent of all other assignment decisions. Consequently, the global allocation objective admits the linear surrogate representation
\begin{equation}
\sum_{(i,j)\in\mathcal{A}} J_{ij}^{(i)}\, z_{ij},
\end{equation}
where $J_{ij}^{(i)}$ denotes the isolated one-constraint projection cost and independent of any other binary variable $z_{kl}$


While Assumption~\ref{ass:no_higher_order} justifies linearization of the allocation objective, we still require a quantitative link between the surrogate edge costs and the actual safety-induced control deviation.

\begin{assumption}[Additive surrogate model for safety-induced deviation]
\label{ass:additive_surrogate}
For each agent $i\in\mathcal{M}$ and any assigned constraint set $\mathcal{B}_i \subseteq \mathcal{A}$, define the true deviation
\[
\Delta_i(\mathcal{B}_i)
:=
\min_{\mathbf{u}_i}
\|\mathbf{u}_i - \mathbf{u}_{\mathrm{nom},i}\|^2
\]
subject to all constraints in $\mathcal{B}_i$ and the input bounds. We assume that the induced deviation satisfies the additive upper bound
\begin{equation}
\Delta_i(\mathcal{B}_i)
\le
\sum_{j\in\mathcal{B}_i} J_{ij}^{(i)}.
\label{eq:additive_upper_bound}
\end{equation}
\end{assumption}

While Assumptions~\ref{ass:no_higher_order} and~\ref{ass:additive_surrogate} justify linearization of the allocation objective, directly incorporating the continuous input constraints $\mathbf u_{\min} \le \mathbf u_i \le \mathbf u_{\max}$ of~\eqref{eq:true_minlp} into a combinatorial surrogate remains computationally demanding and may still lead to long run-times and feasibility issues in dense scenarios. To preserve tractability, the surrogate coordination layer focuses exclusively on responsibility allocation, while actuation limits are enforced at the local quadratic program level where safety filtering is ultimately performed. We therefore introduce the following structural feasibility assumption.



\begin{assumption}[Local safety-filter feasibility under assignment]
\label{ass:local_feasibility}
For any responsibility assignment $\bm z$ satisfying the coverage condition~\eqref{eq:coverage}, and for any agent $i\in\mathcal M$, the corresponding local QP-based safety-filter  \eqref{eq:local_qp} is feasible under the imposed input bounds $\mathbf u_{\min}\le \mathbf u_i\le \mathbf u_{\max}$.
\end{assumption}

Assumption~\ref{ass:local_feasibility} formalizes the separation between the discrete allocation layer and the continuous safety-enforcement layer. The allocation mechanism determines responsibility distribution, whereas feasibility of the local safety-filter QPs is a property of the underlying control synthesis under input constraints. Since conventional decentralized schemes already require feasibility of these QPs when enforcing all safety constraints, the proposed allocation modifies only the distribution of enforcement tasks without altering the structural feasibility requirement. 


With the surrogate cost $J_{ij}^{(i)}$ of assigning constraint $(i,j)$ to agent $i$. Under Assumptions~\ref{ass:no_higher_order}--\ref{ass:additive_surrogate}, the coordination layer can be approximated by the following MILP:
\begin{equation}
\begin{aligned}
\min_{\{z_{ij}\}} \quad 
& \sum_{(i,j)\in\mathcal{A}} J_{ij}^{(i)}\, z_{ij} \\
\text{s.t.} \quad 
& z_{ij} \in \{0,1\}, \\
& z_{ij} + z_{ji} \ge 1,
  \quad \forall (i,j)\in\mathcal{A}.
\end{aligned}
\label{eq:milp_surrogate}
\end{equation}

The resulting assignment induces a sparse, directed responsibility graph
\(
\mathcal{G}_S = (\mathcal{M}, \mathcal{E}_S)
\),
with \( \mathcal{E}_S = \{ (i \to j) \mid z_{ij} = 1 \} \), and defines the responsibility-aware sets
\begin{equation}
\mathcal{A}_i^\star := \{ j \in \mathcal{A} \mid z_{ij} = 1 \}.
\label{eq:assigned_constraints}
\end{equation}

To construct the cost coefficients \( J_{ij}^{(i)} \), we recall the affine HOCBF condition~\eqref{eq:affine_constraint} and replace the unknown neighbor input \( \mathbf{u}_j \) with its admissible estimate \( \bar{\mathbf{u}}_j \), yielding
\begin{equation}
(2\mathbf{r}_{ij})^\top \mathbf{u}_i \ge b_{ij}, \qquad
b_{ij} := c_{ij} + (2\mathbf{r}_{ij})^\top \bar{\mathbf{u}}_j.
\label{eq:halfspace_bid}
\end{equation}
For this single constraint, consider the projection problem
\begin{equation}
\min_{\mathbf{u}_i}\ \|\mathbf{u}_i - \mathbf{u}_{\mathrm{nom},i}\|^2
\quad \text{s.t.}\quad (2\mathbf{r}_{ij})^\top \mathbf{u}_i \ge b_{ij},
\label{eq:proj_qp_bid}
\end{equation}
whose solution follows directly from the KKT conditions,
\begin{equation}
\mathbf{u}_i^\star =
\mathbf{u}_{\mathrm{nom},i} +
\frac{\bigl(b_{ij} - (2\mathbf{r}_{ij})^\top \mathbf{u}_{\mathrm{nom},i}\bigr)_+}{\|2\mathbf{r}_{ij}\|^2}
(2\mathbf{r}_{ij}),
\label{eq:proj_closed_bid}
\end{equation}
with $(x)_+:=\max\{x,0\}$ and the associated deviation cost
\begin{equation}
\|\mathbf{u}_i^\star - \mathbf{u}_{\mathrm{nom},i}\|^2 =
\frac{\bigl(b_{ij} - (2\mathbf{r}_{ij})^\top \mathbf{u}_{\mathrm{nom},i}\bigr)_+^2}{\|2\mathbf{r}_{ij}\|^2}.
\label{eq:dev_cost_bid}
\end{equation}
This quantity captures the minimal corrective effort required for agent \( i \) to enforce constraint \( (i,j) \) in isolation, and under homogeneous assumptions serves as a consistent proxy for the assignment cost \( J_{ij}^{(i)} \). Accordingly, we define the local avoidance cost
\begin{equation}
J_{ij}^{(i)} :=
\frac{\bigl(b_{ij} - (2\mathbf{r}_{ij})^\top \mathbf{u}_{\mathrm{nom},i}\bigr)_+^2}{\|2\mathbf{r}_{ij}\|^2},
\label{eq:bid_value}
\end{equation}
with \( J_{ij}^{(i)} = +\infty \) if the corresponding control violates input bounds. Conditioned on the global assignment \( \mathbf{z} \) computed by \eqref{eq:milp_surrogate}, each agent enforces only its assigned constraints by solving the reduced local safety filter
\begin{equation}
\begin{aligned}
\min_{\mathbf{u}_i} \quad & \|\mathbf{u}_i - \mathbf{u}_{\mathrm{nom},i}\|^2 \\
\text{s.t.} \quad 
& (2\mathbf{r}_{ij})^\top \mathbf{u}_i - (2\mathbf{r}_{ij})^\top \bar{\mathbf{u}}_j \ge c_{ij},
\quad \forall j \in \mathcal{A}_i^\star, \\
& \mathbf{u}_{\min} \le \mathbf{u}_i \le \mathbf{u}_{\max}.
\end{aligned}
\label{eq:reduced_qp_style}
\end{equation}

This allocation-aware filtering scheme eliminates redundant enforcement and reduces local computational complexity in dense or large interaction scenarios. 

\begin{theorem}[Global safety in centralized responsibility allocation]
\label{thm:global_safety_allocation}
Consider the MAS governed by the HOCBF constraints~\eqref{eq:barrier}--\eqref{eq:cij}. Suppose that Assumption~\ref{ass:local_feasibility} holds and that the neighbor input estimates used in~\eqref{eq:halfspace_bid} satisfy $(2\mathbf r_{ij})^\top \bar{\mathbf u}_j \ge (2\mathbf r_{ij})^\top \mathbf u_j$ for all $(i,j)\in\mathcal A$. Then the set
\[
\mathcal{S} := \bigcap_{(i,j)\in\mathcal{A}} \mathcal{S}_{ij}
\]
is forward invariant under the closed-loop dynamics induced by~\eqref{eq:reduced_qp_style}. Consequently, the overall MAS remains collision-free for all \( t \ge 0 \).
\end{theorem}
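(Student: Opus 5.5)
The plan is to reduce the global statement to a pairwise one and then apply the HOCBF forward-invariance argument of Definition~\ref{HOCBF_definition} (through the recursion \eqref{eq:recursive_hocbf}) to each pair separately. Because $\mathcal{S}$ is a finite intersection of the sets $\mathcal{S}_{ij}$, it is enough to show that, along closed-loop trajectories generated by \eqref{eq:reduced_qp_style}, every pair $(i,j)\in\mathcal{A}$ satisfies the true coupled constraint \eqref{eq:affine_constraint}, i.e. $\psi_2^{ij}\ge 0$, at all times. The interpretation of the condition $\psi_2\ge0$ as a guarantee of invariance is the one given in the paper. Here I take $\mathcal{S}_{ij}$ to mean the intersection of the sets $\{\psi_0^{ij}\ge0\}$ and $\{\psi_1^{ij}\ge0\}$, and I assume the initial condition lies in it, as the HOCBF framework requires.

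\textbf{Step 1 (well-posedness and coverage).} By Assumption~\ref{ass:local_feasibility}, the reduced QP \eqref{eq:reduced_qp_style} is feasible for every agent $i$ and every assignment $\bm z$ satisfying \eqref{eq:coverage}. The MILP \eqref{eq:milp_surrogate} imposes \eqref{eq:coverage}, so each agent $i$ obtains an input $\mathbf u_i$ that satisfies all constraints indexed by $\mathcal{A}_i^\star$. For a fixed pair $(i,j)\in\mathcal{A}$, coverage gives $z_{ij}=1$ or $z_{ji}=1$. Since the roles of $i$ and $j$ are symmetric, assume without loss of generality that $z_{ij}=1$, so that $j\in\mathcal{A}_i^\star$.

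\textbf{Step 2 (from the estimated constraint to the true constraint).} Agent $i$ enforces $(2\mathbf r_{ij})^\top\mathbf u_i \ge c_{ij} + (2\mathbf r_{ij})^\top\bar{\mathbf u}_j$. Combining this with the hypothesis $(2\mathbf r_{ij})^\top\bar{\mathbf u}_j\ge(2\mathbf r_{ij})^\top\mathbf u_j$ gives
\[
(2\mathbf r_{ij})^\top\mathbf u_i-(2\mathbf r_{ij})^\top\mathbf u_j\ge c_{ij}.
\]
This is exactly \eqref{eq:affine_constraint}, and by \eqref{eq:barrier_derivatives}--\eqref{eq:cij} it is equivalent to $\dot\psi_1^{ij}+\gamma_2\psi_1^{ij}\ge0$. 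The inequality holds whatever agent $j$ actually does, including the case where $j$ does not enforce the pair itself. In the case $z_{ji}=1$, the same argument applies with $\mathbf r_{ji}=-\mathbf r_{ij}$ and $c_{ji}=c_{ij}$. Here $c_{ji}=c_{ij}$ holds because every term in \eqref{eq:cij} is invariant under the swap $\mathbf r\mapsto-\mathbf r$, $\mathbf v\mapsto-\mathbf v$. The resulting constraint is the same inequality multiplied through, so it yields the identical condition.

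\textbf{Step 3 (comparison lemma cascade).} From $\dot\psi_1\ge-\gamma_2\psi_1$ and $\psi_1(0)\ge0$, the comparison lemma gives $\psi_1(t)\ge e^{-\gamma_2 t}\psi_1(0)\ge0$. Since $\psi_1=\dot\psi_0+\gamma_1\psi_0$, the same argument gives $\psi_0(t)\ge0$, i.e. $\|\mathbf r_{ij}\|\ge r_s$. Because the argument holds for every pair simultaneously, $\mathcal S$ is forward invariant and the system is collision-free.

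The main obstacle is regularity. Applying the comparison lemma requires the closed-loop inputs to be at least piecewise continuous, so that solutions exist and the inequality integrates. The difficulty is that $\bm z$ is recomputed and may switch discretely, and the active constraint set changes with it. I would handle this by noting that the pairwise inequality holds pointwise for almost every $t$, whichever agent is responsible at that instant. The comparison lemma then applies to absolutely continuous $\psi_1$ under inputs that are measurable and bounded (guaranteed by $\mathcal U$). The switching therefore does not matter, provided Carathéodory solutions exist, and I would state this as an implicit standing assumption.
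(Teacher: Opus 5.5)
Your proposal is correct and follows the paper's proof. Coverage selects a responsible agent, the estimate hypothesis upgrades the enforced constraint \eqref{eq:halfspace_bid} to the true coupled condition \eqref{eq:affine_constraint}, and the HOCBF cascade then gives invariance of each $\mathcal{S}_{ij}$ and hence of $\mathcal{S}$. Your comparison-lemma argument, the initial condition $\psi_1^{ij}(0)\ge 0$ (which is genuinely necessary, since $\{h_{ij}\ge 0\}$ alone is not invariant), the check $c_{ji}=c_{ij}$ behind the WLOG, and the regularity caveat all make explicit what the paper compresses into ``standard HOCBF arguments.''

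One caveat: expanding \eqref{eq:recursive_hocbf} gives the coefficient $\gamma_1\gamma_2$ on $\|\mathbf r_{ij}\|^2-r_s^2$, not the $\gamma_2$ printed in \eqref{eq:cij}. So the exact equivalence with $\psi_2^{ij}\ge 0$ that you cite in Step~2 holds only for the corrected $c_{ij}$; this slip is inherited from the paper and does not change the structure of your argument.
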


\begin{proof}[Proof]
Fix any pair \( (i,j) \in \mathcal{A} \). By the coverage condition~\eqref{eq:coverage}, either \( z_{ij} = 1 \) or \( z_{ji} = 1 \). Without loss of generality, assume \( z_{ij} = 1 \), so that agent \( i \) enforces the corresponding constraint in~\eqref{eq:reduced_qp_style}. By construction of~\eqref{eq:halfspace_bid}, satisfaction of this constraint implies the original affine HOCBF condition~\eqref{eq:affine_constraint}. Standard HOCBF arguments then guarantee forward invariance of \( \mathcal{S}_{ij} \). Since the argument applies to every pair and the assignment is global, forward invariance of \( \mathcal{S} \) follows.
\end{proof}

While Theorem~\ref{thm:global_safety_allocation} establishes that centralized responsibility allocation preserves global collision avoidance, it does not address how the distribution of enforcement tasks affects control performance. We therefore next show that the proposed responsibility assignment minimizes safety-induced control effort.

\begin{proposition}[Minimal additive upper bound under responsibility allocation]
\label{prop:minimal_invasiveness_clean}
Suppose Assumption~\ref{ass:additive_surrogate} holds.
Let $\bm z^\star$ be an optimal solution of~\eqref{eq:milp_surrogate}.
Then, among all feasible assignments satisfying the coverage constraints, $\bm z^\star$ minimizes an additive upper bound on the total
safety-induced deviation, i.e.,
\begin{equation}
\sum_{i\in\mathcal{M}}
\|\mathbf{u}_i^\star
- \mathbf{u}_{\mathrm{nom},i}\|^2
\le
\sum_{(i,j)\in\mathcal{A}}
J_{ij}^{(i)}\, \tilde z_{ij}
\label{eq:proxy_bound_clean}
\end{equation}
for any alternative feasible assignment $\tilde{\bm z}$.
\end{proposition}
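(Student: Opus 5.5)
The argument chains three inequalities. Under $\bm z^\star$, the actual deviation $\|\mathbf u_i^\star-\mathbf u_{\mathrm{nom},i}\|^2$ is bounded by the additive surrogate of Assumption~\ref{ass:additive_surrogate}. That surrogate, summed over agents, equals the MILP objective at $\bm z^\star$. Optimality of $\bm z^\star$ in~\eqref{eq:milp_surrogate} then bounds this value by the objective at any feasible $\tilde{\bm z}$. No property of the safety filter beyond Assumption~\ref{ass:additive_surrogate} should be needed.

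\textbf{Identify the local input.} Fix an agent $i$. Let $\mathcal B_i^\star$ be the constraint set assigned to $i$ by $\bm z^\star$, i.e.\ the pairs $(i,j)$ with $j\in\mathcal A_i^\star$ and $z^\star_{ij}=1$. The input $\mathbf u_i^\star$ is the minimizer of the reduced filter~\eqref{eq:reduced_qp_style}. That problem is precisely the problem defining $\Delta_i(\mathcal B_i^\star)$: same objective, constraints from $\mathcal B_i^\star$ with $\bar{\mathbf u}_j$ substituted, and the input bounds. Hence $\|\mathbf u_i^\star-\mathbf u_{\mathrm{nom},i}\|^2=\Delta_i(\mathcal B_i^\star)$. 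Applying~\eqref{eq:additive_upper_bound} gives
\[
\|\mathbf u_i^\star-\mathbf u_{\mathrm{nom},i}\|^2\le\sum_{j:(i,j)\in\mathcal A} J_{ij}^{(i)} z^\star_{ij}.
\]
If $\mathcal B_i^\star=\emptyset$, the filter returns $\mathbf u_{\mathrm{nom},i}$ whenever it lies within the input bounds. Both sides are then $0$, and the bound holds trivially.

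\textbf{Sum and invoke optimality.} Summing over $i\in\mathcal M$ regroups the double sum into $\sum_{(i,j)\in\mathcal A}J_{ij}^{(i)}z^\star_{ij}$, because each directed pair $(i,j)$ belongs to exactly one agent $i$. Now take any $\tilde{\bm z}$ that satisfies the binary and coverage constraints. It is feasible for~\eqref{eq:milp_surrogate}, so optimality of $\bm z^\star$ gives $\sum J_{ij}^{(i)}z^\star_{ij}\le\sum J_{ij}^{(i)}\tilde z_{ij}$. Chaining the two inequalities yields~\eqref{eq:proxy_bound_clean}. The same chain also shows that $\bm z^\star$ attains the smallest value of the additive bound among all feasible assignments, which is the ``minimizes'' part of the claim.

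\textbf{Main obstacle: bookkeeping in the identification step.} The difficulty is not the inequality chain but matching objects across the paper. First, the index set in $\Delta_i$ is written as $j\in\mathcal B_i$ with $\mathcal B_i\subseteq\mathcal A$. I would interpret it as the neighbors $j$ with $(i,j)$ assigned to $i$, matching~\eqref{eq:assigned_constraints}. Second, $\Delta_i$ must be read with $\bar{\mathbf u}_j$ substituted, so that it coincides with~\eqref{eq:reduced_qp_style}. Third, infinite costs $J_{ij}^{(i)}=+\infty$ need handling. If $\tilde{\bm z}$ uses such an edge, the right-hand side of~\eqref{eq:proxy_bound_clean} is $+\infty$ and the bound holds trivially. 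If $\bm z^\star$ itself uses one, then every feasible assignment has infinite cost and the inequality again holds trivially. Otherwise all terms are finite and the argument above goes through unchanged.
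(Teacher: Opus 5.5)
Your proof is correct and follows the paper's argument. Both identify each agent's reduced-filter deviation with $\Delta_i$ of its assigned constraint set, apply the additive bound of Assumption~\ref{ass:additive_surrogate}, sum over agents, and invoke optimality of $\bm z^\star$ in~\eqref{eq:milp_surrogate}. The only difference is that the paper states the bound for the deviation under an arbitrary feasible $\tilde{\bm z}$ and then says $\bm z^\star$ minimizes the right-hand side, whereas you instantiate the bound at $\bm z^\star$ and chain it with optimality explicitly, which is exactly what the displayed inequality~\eqref{eq:proxy_bound_clean} needs.
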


\begin{proof}[Proof]
Fix any feasible assignment $\tilde{\bm z}$. For each agent $i$, let $\mathcal{B}_i(\tilde{\bm z}) := \{ j\in\mathcal{A}_i \mid \tilde z_{ij}=1\}$. By Assumption~\ref{ass:additive_surrogate}, \[ \|\mathbf{u}_i^\star(\tilde{\bm z})  - \mathbf{u}_{\mathrm{nom},i}\|^2 = \Delta_i(\mathcal{B}_i(\tilde{\bm z})) \le \sum_{j\in\mathcal{B}_i(\tilde{\bm z})} J_{ij}^{(i)}. \] Summing over all agents yields \[ \sum_{i\in\mathcal{M}} \|\mathbf{u}_i^\star(\tilde{\bm z})  - \mathbf{u}_{\mathrm{nom},i}\|^2 \le \sum_{(i,j)\in\mathcal{A}} J_{ij}^{(i)}\, \tilde z_{ij}. \] Since $\bm z^\star$ minimizes the right-hand side of \eqref{eq:proxy_bound_clean}, the claim follows.
\end{proof}

\section{Results}
Numerical simulations were conducted in MATLAB to evaluate the proposed coordination mechanisms considering the problem presented in Section~\ref{sec:Problem}. The fully decentralized local QPs as defined in Eq.~\eqref{eq:local_qp} and full MILP coordination as defined in Eq.~\eqref{eq:milp_surrogate} with local QPs are compared. A system of $N=100$ agents is considered. Initial and goal positions are sampled uniformly in $[-5.2,5.2]^2$. The safety radius is $r_s=0.3$. Safety is enforced via local QP-based safety filters following Eq.~\eqref{eq:reduced_qp_style} and the defined HOCBF approach with the parameters $(\gamma_1,\gamma_2)=(5,2)$. The velocity and input bounds used in activation and constraint enforcement are $v_{\max/\min}=\pm 5$, while for the sake of property analysis input constraints are neglected. Figure~\ref{fig:trajectoties1}  and Figure~\ref{fig:trajectoties2} show agent trajectories and corresponding distance-to-goal profiles for both compared concepts. The decentralized approach exhibits strong oscillatory paths due to multiple simultaneously active safety enforcements. This results in to a slower convergence, with mission completion at $22.60\,\mathrm{s}$, where mission completion is considered when all agents reached their goal location. The MILP coordination, as proposed in Section \ref{Combinatorial Safety for Multi-Agent Systems via Mixed-Integer Optimization}, reduces redundant constraint enforcement and yields smoother trajectories, reducing mission time to $7.50\,\mathrm{s}$.
\begin{figure}
    \centering
    \includegraphics[width=0.7\columnwidth]{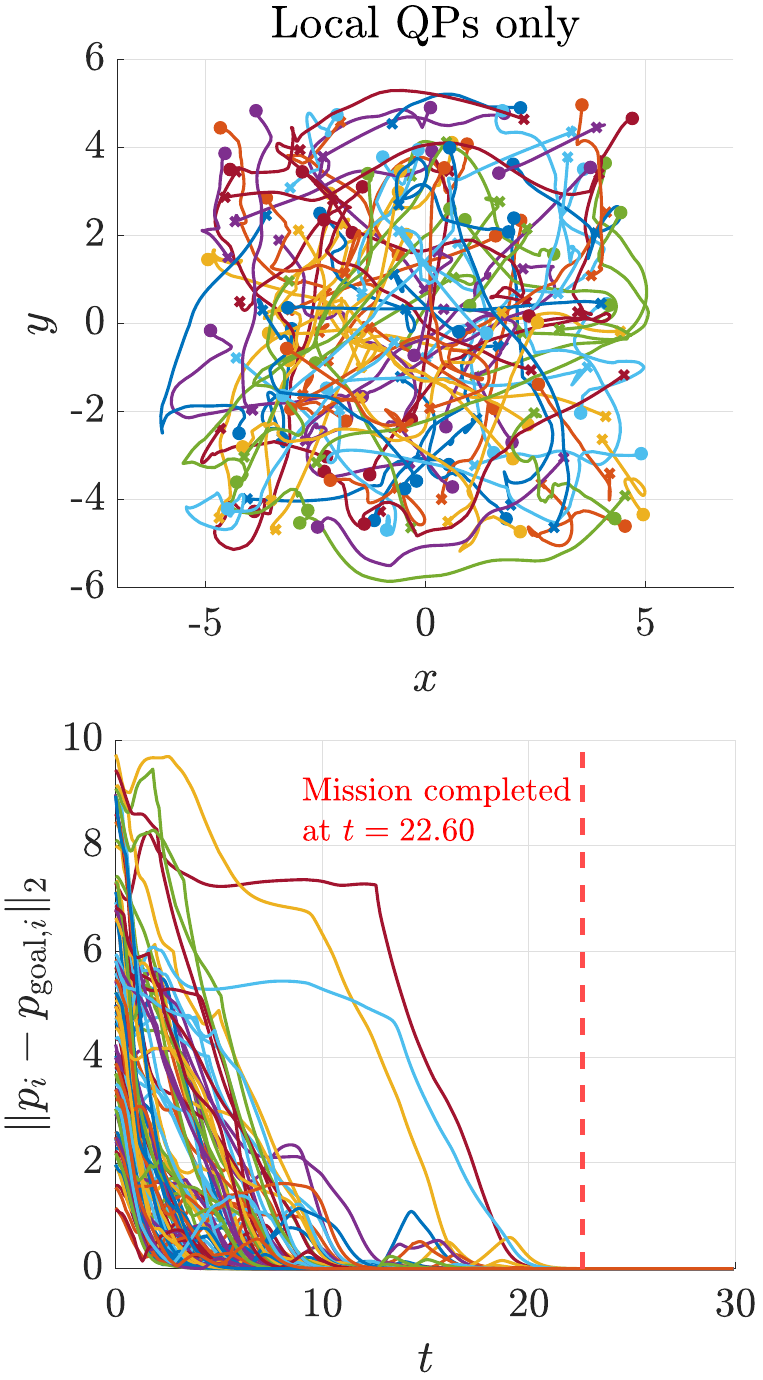}
    \caption{Agent trajectories and distance-to-goal profiles for MAS with fully decentralized QPs.}
    \label{fig:trajectoties1}
\end{figure}
\begin{figure}
    \centering
    \includegraphics[width=0.7\columnwidth]{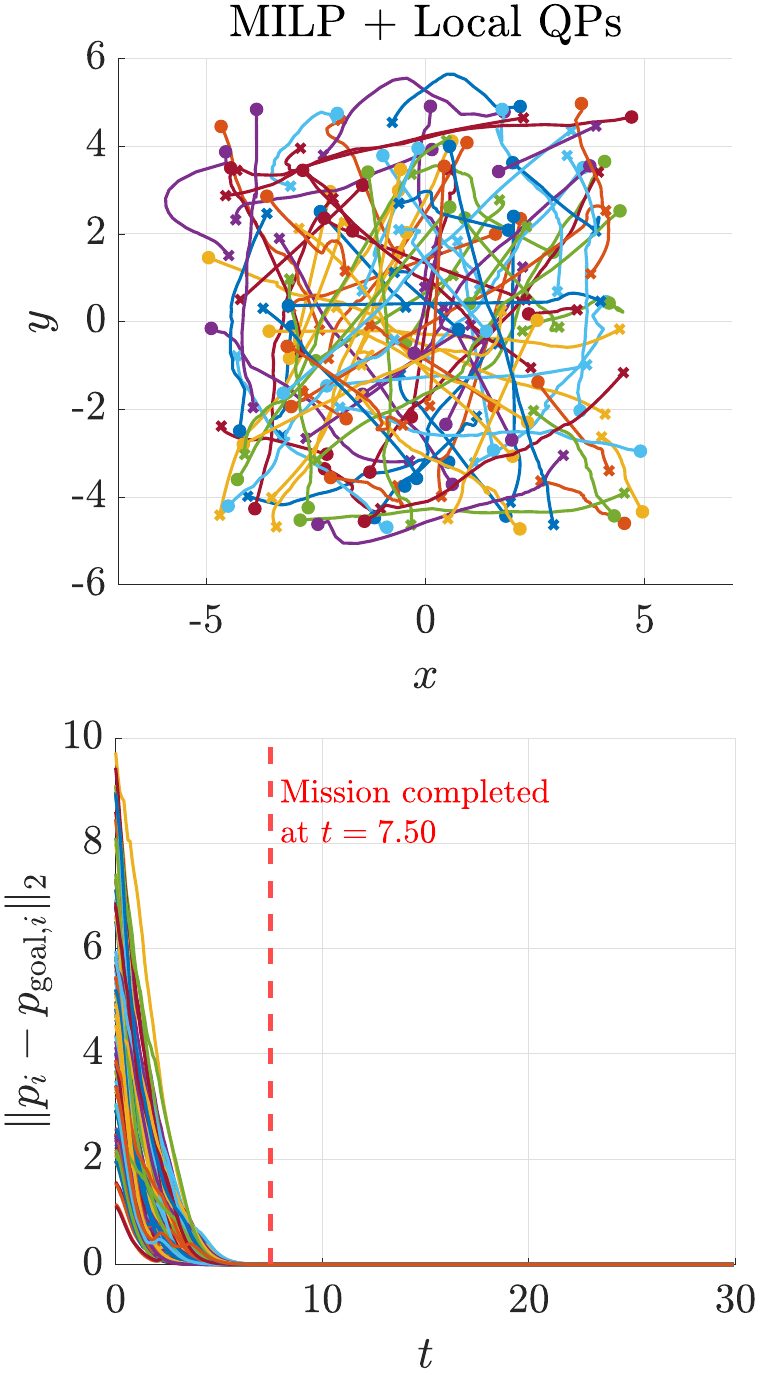}
    \caption{Agent trajectories and distance-to-goal profiles for MAS with MILP coordination and decentralized QPs.}
    \label{fig:trajectoties2}
\end{figure}
Figure~\ref{fig:performance_N30} reports the total deviation cost $\sum_i \|u_i-u_{nom,i}\|_2^2$, the average barrier value $\bar{h}$, and the average local QP execution time $\bar{t}_{\mathrm{QP}}$. The decentralized strategy yields the higher cumulative cost. MILP coordination reduces this cost. The average barrier value confirms reduced conservatism under coordinated strategies while preserving safety. Local QP computation time follows the same trend, with the MILP approach exhibiting for most of the time the lower average execution time.
\begin{figure}
    \centering
    \includegraphics[width=0.8\columnwidth]{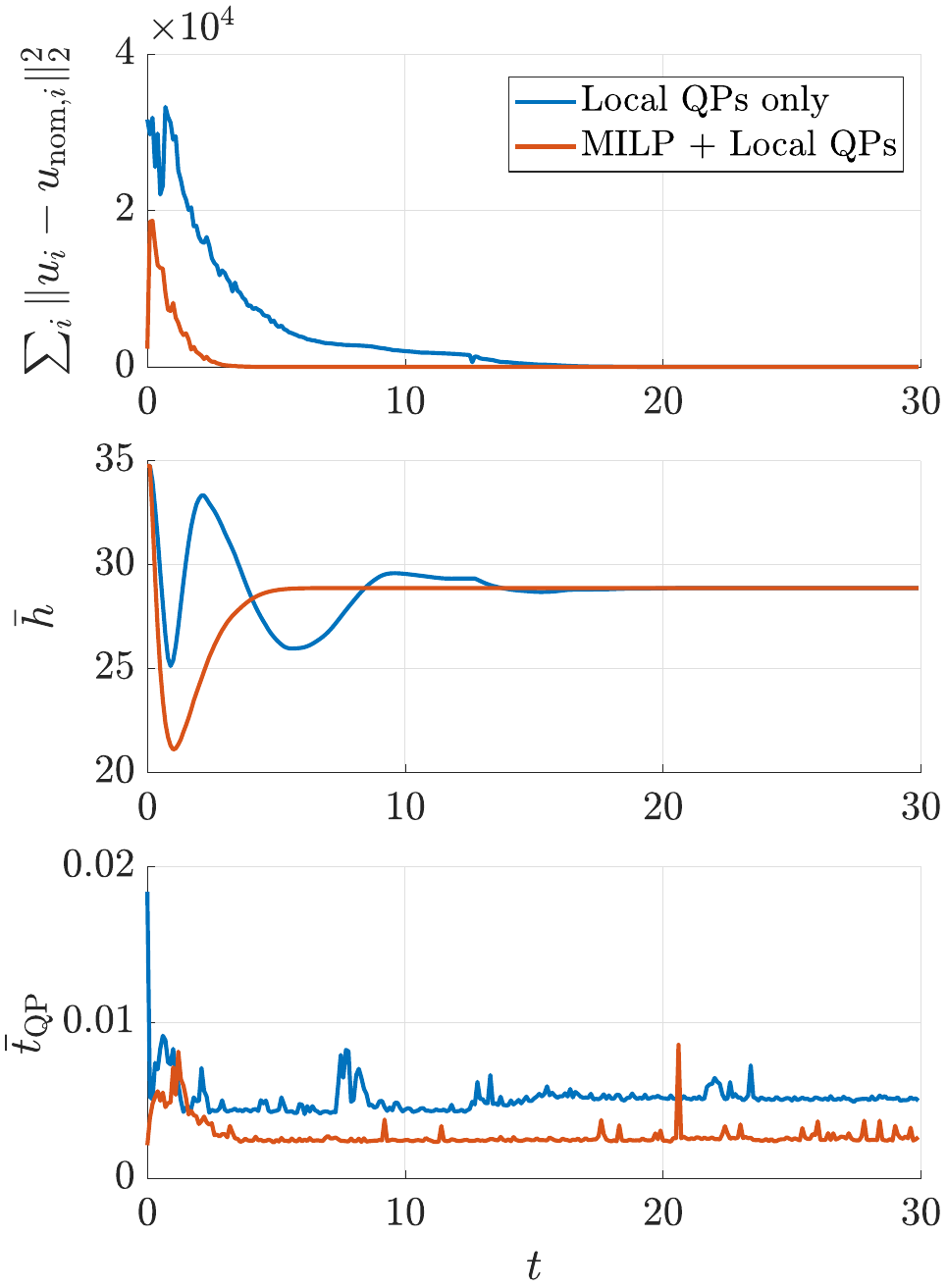}
    \caption{Total deviation cost, average barrier value $\bar{h}$, and average local QP time $\bar{t}_{\mathrm{QP}}$.}
    \label{fig:performance_N30}
\end{figure}

\section{Conclusion}
This paper introduced a combinatorial safety-critical coordination framework that augments decentralized CBF-based filtering with mixed-integer responsibility allocation. By explicitly assigning collision-avoidance responsibilities, redundant pairwise enforcement is eliminated and aggregate control effort is reduced. The separation between discrete coordination and continuous safety filtering preserves formal forward-invariance guarantees while improving scalability in dense multi-agent scenarios. 
\bibliographystyle{IEEEtran}
\bibliography{./references}

\end{document}